\documentclass[a4paper,UKenglish]{article}

\usepackage{microtype}
\usepackage{fullpage}
\usepackage{graphics}
\graphicspath{{./figures/},{./figures/}}

\usepackage{authblk}

\usepackage{graphicx}
\usepackage{amsthm}
\usepackage{amsmath}
\usepackage{enumerate}
\usepackage{color}
\usepackage{xspace}
\usepackage{url}

\usepackage{amsfonts}\usepackage{amssymb}
\usepackage{thmtools}\usepackage{mathscinet}
\usepackage{thm-restate}
    \newtheorem{theorem}{Theorem}
    
    \newtheorem{lemma}[theorem]{Lemma}
    
    \newtheorem{remark}[theorem]{Remark}

    	\definecolor{darkgreen}{rgb}{0.01, 0.93, 0.29}
\definecolor{lightbrown}{rgb}{0.91, 0.4, 0.11}
\usepackage{framed}


\title{Simultaneous Embedding of Colored Graphs\thanks{This work is supported in part by Natural Sciences and Engineering Research Council of Canada (NSERC).}}

\author[1]{Debajyoti Mondal} 
\affil[1]{Department of Computer Science, University of Saskatchewan, Saskatoon, Canada\\
  \texttt{d.mondal@usask.ca}}

\usepackage[textsize=tiny]{todonotes}
\usepackage{verbatim}

\usepackage{setspace}
%
 
\newcommand{\J}[1]{{\color{black} #1}}

\begin{document}   

\maketitle  

\begin{abstract}
A set of colored graphs are compatible, if for every  color $i$, the number of vertices of color $i$ is the same in every graph. A simultaneous embedding of $k$ compatibly colored graphs, each with $n$ vertices,  consists of $k$ planar polyline drawings of these graphs such that the vertices of the same color are mapped to a common set of vertex locations. 

We prove that simultaneous embedding of $k\in o(\log \log n)$ colored planar graphs, each with $n$ vertices, can always be computed with a sublinear number of bends per edge. Specifically,   we show an $O(\min\{c, n^{1-1/\gamma}\})$  upper bound on the number of bends per edge, where $\gamma = 2^{\lceil k/2 \rceil}$ and  $c$ is the total number of colors. Our bound, which results from a better analysis of a previously known algorithm [Durocher and Mondal, SIAM J. Discrete Math., 32(4), 2018], improves the bound for $k$, as well as the bend complexity by a factor of $\sqrt{2}^{k}$. The algorithm can be generalized to obtain small universal point sets for colored graphs. We prove that $n\lceil c/b \rceil$ vertex locations, where $b\ge 1$, suffice to embed any set of compatibly colored $n$-vertex planar graphs with bend complexity $O(b)$, where $c$ is the number of colors. 

 
 
\end{abstract}


\section{Introduction}
\label{sec:introduction}
Let $\mathcal{G}$ be a set of $k$ planar graphs such that each graph $G\in \mathcal{G}$ has $n$ vertices and these vertices are labelled without repetition with the numbers $1,\ldots,n$. A \emph{simultaneous embedding} of $\mathcal{G}$ is a set of $k$ planar polyline drawings of the graphs in $\mathcal{G}$ such that the vertices with the same label have the same location in $\mathbb{R}^2$ (Fig.~\ref{fig:intro}(a)--(c)). Simultaneous embedding  can be used to model multilevel circuit  layout~\cite{DBLP:conf/cccg/TahmasbiH10} \J{and to visualize} different types of relations among a common set of nodes, e.g., different types of code clones over the files of a software~\cite{DBLP:journals/vi/MondalMRSLW19}. 
 
\begin{figure}[h]
\includegraphics[width=\textwidth]{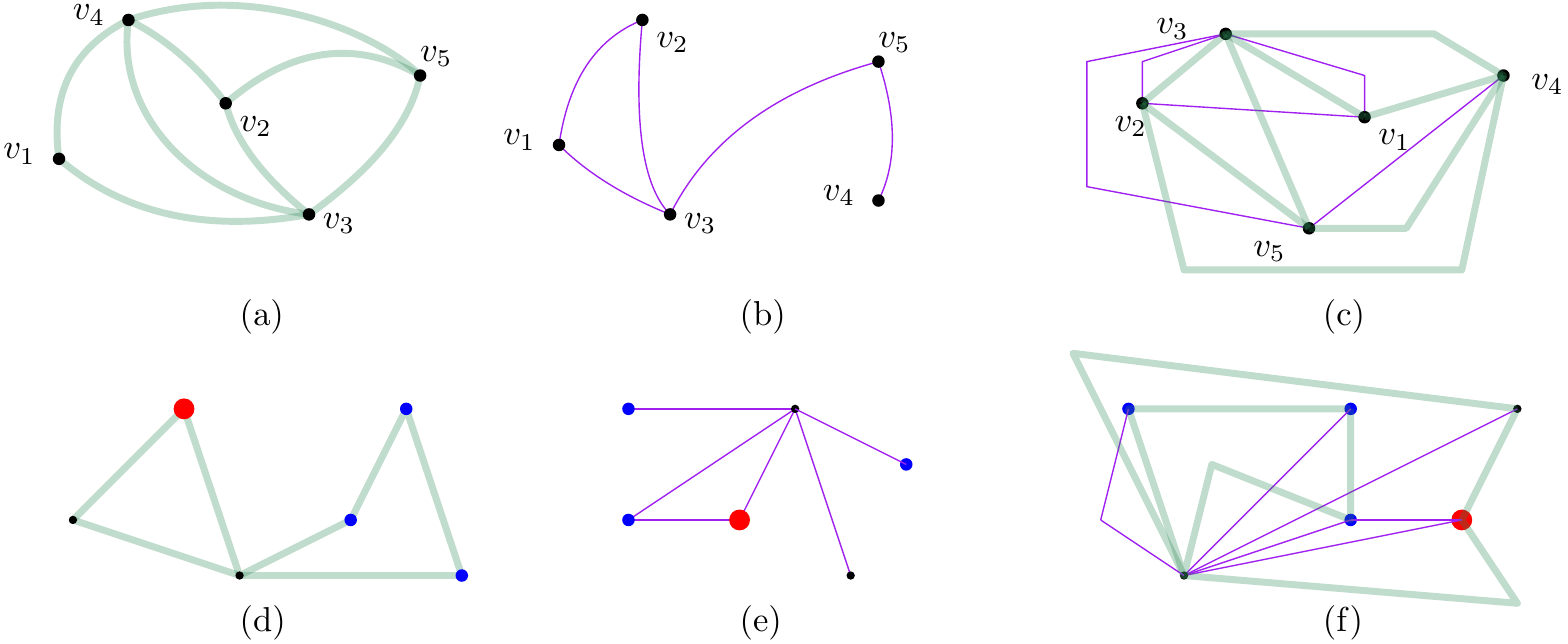}
\caption{(a)--(b) A pair of graphs, and (c) their simultaneous embedding with bend complexity 2. (d)--(f) Illustration for a colored simultaneous embedding of two graphs with bend complexity 1. Here the vertices are colored with red, blue,  and black (i.e., large, small, and tiny discs).} 
\label{fig:intro}
\end{figure}

In a colored simultaneous embedding problem  (Fig.~\ref{fig:intro}(d)--(f)),  \J{the input consists of graphs whose vertices are labelled with the colors $1,\ldots,c$, where $1\le c\le n$,}  such that for every color $q$ and every pair of graphs $G,G'$ in $\mathcal{G}$, the number of vertices of color $q$ in $G$ is the same as that of $G'$, and the locations of the  vertices of color $q$ in the drawing of $G$ are the same as that of $G'$. From both the perspective of readability and VLSI applications, a desirable goal is to minimize the \emph{bend complexity}, i.e., the number of bends per edge, in the drawing. A \emph{$b$-bend simultaneous embedding} consists of drawings with bend complexity at most $b$.

\subsection{Related Research}
Colored simultaneous embedding was first considered by Brandes et al.~\cite{DBLP:journals/algorithmica/BrandesEEFFGGHKKa11}, where they examined classes of graphs that admit simultaneous embedding with bend complexity 0, i.e.,  each edge is a straight line segment, which is also   known as \emph{simultaneous geometric  embedding}. Although  every pair of paths  admits  simultaneous geometric embedding, there exist 3 paths that do not admit simultaneous embedding~\cite{BrassCDEEIKLM07}, even when  colored with $6$ colors~\cite{DBLP:journals/algorithmica/BrandesEEFFGGHKKa11}. However, 
 any number of 3-colored paths can be simultaneously embedded with bend complexity 0~\cite{DBLP:journals/algorithmica/BrandesEEFFGGHKKa11}. A rich body of literature examines   simultaneous embedding   and its  variants (e.g., see the survey by  Bl{\"{a}}sius et al.~\cite{DBLP:reference/crc/BlasiusKR13}). 

The \emph{thickness} of a graph $G$ is the minimum number $t$ such that $G$ can be decomposed into $t$ planar subgraphs. By F\'{a}ry's theorem~\cite{fary}, every planar graph (equivalently, thickness-1 graph) has a planar straight-line drawing. Simultaneous embedding of $k$ planar graphs with small bend complexity can be seen as an extension of F\'{a}ry's theorem  for drawing thickness-$k$ graphs on $k$ planar layers. Every pair of planar graphs can be  simultaneously embedded using bend complexity 2~\cite{EK05,GiacomoL07}, and thus thickness-2 graphs admit 2-bend polyline drawings on 2 planar layers. 

Durocher and Mondal~\cite{DBLP:journals/siamdm/DurocherM18} showed that  every thickness-$k$ graph can be drawn on $k$ planar layers with bend complexity at most  $O(\sqrt{2}^{k} \cdot n^{1-1/\beta})$, where $\beta = 2^{\lceil (k-2)/2 \rceil }$.   Although the bound seems to be sublinear for $k\in 2\log ( o(n^{1/\beta}))$, they claimed a sublinear upper bound only for fixed $k$. This case is straightforward to observe from their algorithm, but the case when $k$ is not fixed, requires a careful analysis of the dependency between $n$ and $k$. In this paper, we modify the algorithm to accommodate the colors, and provide a better analysis of the algorithm leading to a bound of $O(\min\{c, n^{1-1/\gamma}\})$, where $\gamma = 2^{\lceil k/2 \rceil}$. This bound holds for $k\in o(\log \log n)$. In addition to the improved bound on $k$, it also removes the multiplicative factor $\sqrt{2}^k$  from the previously known bound. 

A point set $S$ is \emph{$t$-bend universal} for a class  of graphs if every graph in that class  admits a $t$-bend planar polyline drawing on $S$. There exists an  $1$-bend universal point set of size $n$ for planar  graphs~\cite{DBLP:conf/gd/EverettLLW07}. Note that this can be seen as a universal point set for graphs that are colored with a single color. For outerplanar graphs, which are compatibly colored  with $c$ colors,  a $(4c+1)$-bend universal point set is known~\cite{DBLP:journals/jgaa/GiacomoDLMTW08}. 


Pach and Wenger~\cite{PachW01} showed that $\Omega(n)$ bends are  sometimes necessary to construct a planar polyline drawing of a graph if for every vertex, a vertex location is  specified in the   input. While the graphs here can be seen as colored with $n$ colors, non-trivial lower bound on the bend complexity has recently been achieved also for graphs colored with only three colors~\cite{DBLP:conf/gd/GiacomoGLN17}.  



\subsection{Contribution} We show that every set of $k\in o(\log \log n)$ planar graphs, each with $n$ vertices and compatibly colored with $c$ colors, can be simultaneously embedded with bend complexity $O(\min\{c,n^{1-1/\gamma}\})$, where $\gamma = 2^{\lceil k/2 \rceil}$. 
 Our bound results from a better analysis of a previously known algorithm [Durocher and Mondal, SIAM J. Discrete Math., 32(4), 2018], and  improves the previously known bound for $k$, as well as the bend complexity by a factor of $\sqrt{2}^{k}$.   We also examine the potential trade-off between the number of vertex locations and the bend complexity. We show that for  $n$-vertex planar graphs, which are compatibly colored with $c$ colors, there exists an  $O(b)$-bend universal point set of size $n\lceil c/b \rceil$, where $b\ge 1$. 



The rest of the paper is organized as follows. Section~\ref{sec:tech} \J{introduces}  some notation and preliminary results. Section~\ref{sec:draw} presents our result on  colored simultaneous embedding of $n$-vertex graphs on a set of $n$  vertex locations. Section~\ref{sec:simple} shows a potential trade-off between the bend complexity and the number of vertex locations allowed to compute a simultaneous embedding. Finally, Section~\ref{sec:conclusion} concludes the paper suggesting some open problems. 

\section{Technical Details}
\label{sec:tech}
In this section \J{we introduce} some notation and preliminary results.

A  \emph{monotone topological book embedding}~\cite{DBLP:journals/comgeo/GiacomoDLW05}  of a planar graph $G$ is a planar drawing  of $G$, where the vertices are represented as points on a horizontal line $\ell$, and each edge is drawn as an $x$-monotone polyline between their corresponding end points  such that it does not cross $\ell$ more than once (Fig.~\ref{fig:monotone}(a)--(b)). The line $\ell$ is the \emph{spine} of the embedding, and the crossing points on $\ell$ are the \emph{division vertices}. The path obtained by connecting the vertices (including the division vertices) on $\ell$ in order (e.g., left to right) is the \emph{spinal path} of $G$.

\begin{figure}[h]
\centering
\includegraphics[width=.7\textwidth]{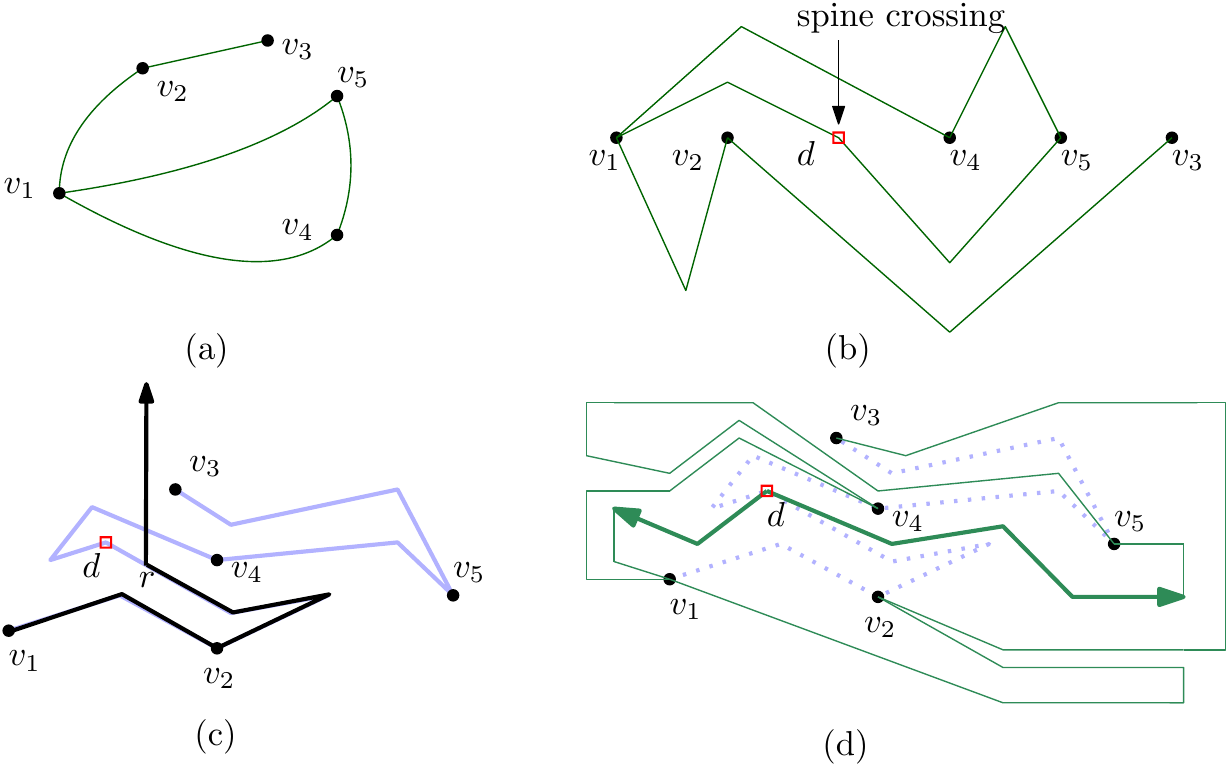}
\caption{(a) A graph $G$, and (b)   its monotone topological book embedding. The  spinal path is $P = (v_1,v_2,d,v_4,v_5,v_3)$, where $d$ is a division vertex. (c)--(d) Illustration for the drawing of $G$ from an uphill drawing of its spinal path. }
\label{fig:monotone}
\end{figure}
Let $\Gamma$ be a planar polyline drawing of a path $P=(v_1,v_2,\ldots,v_n)$. The drawing $\Gamma$ is an \emph{uphill} drawing if for any point $r$ (here $r$ may be a vertex location or an interior point of some edge) in $\Gamma$, the upward ray from $r$ does not intersect the  \J{polyline $v_1,\ldots,v_i,r$}, where $1\le i\le n$ (Fig.~\ref{fig:monotone}(c)).

Two graphs are \emph{compatibly colored} if for each color $q$, both graphs have the same number of vertices of color $q$. Throughout the paper we assume that the graphs are compatibly colored. 
Let $\mathcal{G} = \{G_1,\ldots,G_k\}$ be an instance of the simultaneous embedding problem, where the graphs are colored compatibly with $c$ colors. \J{Consider a monotone topological book embedding of $G_i$ and let $P_i$ be the spinal path.} To color the division vertices in all the graphs compatibly, we introduce dummy vertices as necessary (at the end of the spinal paths). We then color all the division vertices with a color different than the input colors.   Durocher and Mondal~\cite{DBLP:journals/siamdm/DurocherM18} showed that if $P_i$ admits an $b$-bend  uphill drawing on a point set $S$, then $G_i$ admits a  $O(b)$-bend polyline drawing on $S$.
 Hence it suffices to consider only the simultaneous embedding of the spinal paths. Here we give a concise proof for completeness.
 
\begin{lemma}[Durocher and Mondal~\cite{DBLP:journals/siamdm/DurocherM18}]\label{lem:paths}
If a spinal path admits a $b$-bend  uphill drawing on a point set $S$, then the corresponding graph admits an $O(b)$-bend planar polyline drawing on $S$. 
\end{lemma}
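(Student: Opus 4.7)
My plan is to transfer the planar structure of a monotone topological book embedding of $G$ onto the given uphill drawing of its spinal path $P$. First, I would fix a monotone topological book embedding of $G$ whose vertex order along the horizontal spine agrees with $P$. After subdividing every edge of $G$ at its division vertex (if any), each resulting half-edge is an $x$-monotone polyline lying strictly above or strictly below the spine, and the collection of half-edges together with the spinal path forms a planar drawing.

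Second, I would route these half-edges in the given $b$-bend uphill drawing $\Gamma$ of $P$ on $S$, placing upper half-edges above $\Gamma$ and lower half-edges below. For an upper half-edge $(v_i,v_j)$ with $i<j$, the plan is to leave $v_i$ going upward---the uphill condition guarantees that this ascent does not meet $v_1,\dots,v_{i-1}$ or any earlier spinal edge, so we can escape into the upper region without crossings---then traverse the upper region while respecting the nesting of upper edges inherited from the book embedding, and finally descend into $v_j$ from above. Lower half-edges are handled symmetrically by applying the uphill argument to the reversed path.

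Third, I would bound the bends per half-edge by $O(b)$. The ascent from $v_i$ and the descent into $v_j$ each need to thread through only $O(b)$ spinal segments of $\Gamma$, contributing $O(b)$ bends; the intermediate traversal is performed at a single level above the drawing, contributing no further bends beyond those forced by the nesting of the book embedding. Reattaching half-edges at each division vertex then yields a planar polyline drawing of $G$ on $S$ with $O(b)$ bends per edge.

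The main obstacle will be the bend-count analysis of the intermediate traversal: naively hugging the contour of the sub-path $v_i v_{i+1} \cdots v_j$ in $\Gamma$ would produce $\Theta(b(j-i))$ bends per edge. Overcoming this requires exploiting the uphill property to place the traversal at a height that depends only on the nesting depth of the edge in the book embedding rather than on $j-i$, and to schedule the simultaneous ascents and descents of many half-edges in an order consistent with the book embedding so that no two lifted edges cross.
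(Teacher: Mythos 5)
There is a genuine gap, and you have in fact pointed at it yourself in your final paragraph without resolving it. The critical unjustified step is the claim that the ascent from $v_i$ (and the descent into $v_j$) ``need[s] to thread through only $O(b)$ spinal segments of $\Gamma$.'' The uphill property only constrains the portion of the path drawn \emph{before} the point in question: the upward ray from $v_i$ is guaranteed to miss $v_1,\ldots,v_i$, but nothing forbids the later portion $v_{i+1},\ldots,v_n$ (and the vertices on it, together with all of their incident graph edges) from lying directly above $v_i$. Indeed, the uphill drawings actually produced in Section~\ref{sec:draw} stack each new edge so that no earlier point is above it, so a single vertex can have $\Theta(n)$ later spinal edges over it. A vertical escape from $v_i$ into ``the upper region'' therefore has no $O(b)$ bound, and the same applies to the descent into $v_j$. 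Your proposed remedy --- choosing the traversal height according to the nesting depth in the book embedding --- does not repair this: the nesting depth can itself be $\Theta(n)$, and in any case it does not control how many objects the vertical escape must negotiate.

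The paper's proof avoids this obstruction with a device that is absent from your proposal: it does not escape vertically from an already-finished drawing, but reserves, \emph{at the moment each vertex $v$ is placed}, two $b$-bend polygonal escape channels from $v$ to the left and to the right boundary of the bounding box $B$ of $S$, and then draws the \emph{rest} of the spinal path above these channels (this is where the uphill/incremental structure is actually used). An edge of $G$ lying above the spine in the book embedding is routed out of $v_i$ along its left channel, around the outside of $B$, and back in along the left channel of $v_j$; edges below the spine use the right channels symmetrically. The cost is $O(b)+O(1)+O(b)=O(b)$ bends per edge, independent of $j-i$ and of how many edges nest over it, and planarity follows from the nesting order of the book embedding governing the order of the channels and of the arcs outside $B$. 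Without some analogue of these pre-reserved sideways channels (or another mechanism guaranteeing an $O(b)$-bend obstacle-free route from each vertex to a common free region), your construction does not yield the claimed bend complexity.
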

\begin{proof}
Let $B$ be the axis-aligned bounding box of $S$.  While drawing the spinal path, at each vertex $v$, draw two $b$-bend polygonal paths starting at $v$. One path hits the left and the other hits the right boundary of $B$ (e.g., see the bold paths starting at $d$ in Fig.~\ref{fig:monotone}(d)). Then draw the rest of the spinal path above these polygonal paths. Finally, the leftward (rightward) polygonal paths are used to draw the edges that are above (below)  the spinal path in the monotone topological book embedding.
\end{proof}
\begin{remark}\label{rem:1}
Let $\mathcal{P}$ be a set of compatibly colored spinal paths and let $S$ be a point set. If each path in  $\mathcal{P}$ admits an uphill embedding on $S$ with bend complexity $O(b)$, then the graphs corresponding to  $\mathcal{P}$ admit a simultaneous embedding with bend complexity $O(b)$.
\end{remark}  
 
Throughout the paper we consider colors as positive integers. Since the drawing algorithm will heavily use results on partitioning an \emph{integer sequence}, i.e., an ordered list of integers, we often consider a colored path as an integer sequence. 

\J{A  sequence (i.e., ordered set) of numbers is \emph{monotonic} if it is either  non-increasing or non-decreasing. A \emph{$k$-tuple} is an ordered set of $k$ numbers. Given an ordered set of $k$-tuples, the \emph{$i$th dimension}, where $1\le i\le k$, is the sequence of integers obtained by taking the $i$th element of every $k$-tuple in the given order. A \emph{sequence of $k$-tuples is monotonic} if each dimension of the sequence is monotonic}. Here we prove two lemmas on partitioning a sequence of \J{integers and $k$-tuples}, which will be used in our drawing algorithm.  
\begin{lemma}\label{lem:recurse}
\J{Assume that for every sequence of $n$ integers (resp., $k$-tuples), one can find a monotonic subsequence of $n^\delta$, where  $0< \delta < 1$.} Then given a sequence $S$ of $n$ integers \J{(resp., $k$-tuples)}, where $n\ge 2^{1/\delta}$,  one can partition $S$ into  $ O(\frac{n^{1-\delta} }{1-\delta})$ disjoint \J{monotonic subsequence}.
\end{lemma}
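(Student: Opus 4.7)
The plan is to turn the hypothesis into a greedy stripping procedure: while at least $2^{1/\delta}$ elements of $S$ remain, apply the hypothesis to extract a monotonic subsequence of length at least $m^\delta$ from the residual, where $m$ is its current length; once fewer than $2^{1/\delta}$ elements remain, place each of them into its own (trivially monotonic) singleton. Correctness is immediate, so the whole task reduces to bounding the number of monotonic subsequences produced.

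To count these, I would partition the stripping into phases indexed by $j\ge 0$, where phase $j$ consists of all extractions performed while $n/2^{j+1} < m \le n/2^{j}$. Within phase $j$ every extraction removes at least $(n/2^{j+1})^\delta$ elements, and the phase terminates after at most $n/2^{j+1}$ elements have been removed; hence phase $j$ contributes at most $(n/2^{j+1})^{1-\delta}$ subsequences. Summing over all phases gives
$\sum_{j\ge 0}(n/2^{j+1})^{1-\delta} \;=\; n^{1-\delta}/(2^{1-\delta}-1)$.

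The crucial step is then the elementary inequality $2^{1-\delta}-1 \ge (1-\delta)\ln 2$, which follows from convexity of $x\mapsto 2^{x}$ on $[0,1]$. This converts the geometric sum into $O(n^{1-\delta}/(1-\delta))$. I would then verify that the at most $2^{1/\delta}$ singleton subsequences added at the end are absorbed by the same bound: the assumption $n\ge 2^{1/\delta}$ gives $n^{1-\delta}\ge 2^{(1-\delta)/\delta}\ge 2^{1/\delta}/2$, so $2^{1/\delta}\le 2n^{1-\delta}\le 2n^{1-\delta}/(1-\delta)$.

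The main obstacle is extracting the $1/(1-\delta)$ factor from the sum: a naive bound on $1/(2^{1-\delta}-1)$ blows up as $\delta\to 1$, and only the linearization $2^{1-\delta}-1=\Theta(1-\delta)$ produces the right dependence on $\delta$. Beyond this, the argument is identical for sequences of integers and for sequences of $k$-tuples, since the hypothesis is applied in a black-box fashion in both cases.
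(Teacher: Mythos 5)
Your proposal is correct and runs the same greedy algorithm as the paper (repeatedly strip a monotonic subsequence guaranteed by the hypothesis, then finish with singletons once fewer than $2^{1/\delta}$ elements remain), but the counting argument is genuinely different. The paper bounds the number of extractions via the recurrence $T(n)=T(n-n^\delta)+1$ and an induction showing $T(n)\le \frac{d\,n^{1-\delta}}{1-\delta}$ with $d=2(1-\delta)+1$, where the key step is Bernoulli's inequality $(1-n^{\delta-1})^{1-\delta}\le 1-(1-\delta)n^{\delta-1}$. You instead group the extractions into halving phases and sum a geometric series to get $n^{1-\delta}/(2^{1-\delta}-1)$, then linearize via $2^{1-\delta}-1\ge(1-\delta)\ln 2$; this is the same Bernoulli-type estimate in disguise, but your route avoids guessing the induction constant and makes it transparent where the $1/(1-\delta)$ factor comes from, which you correctly identify as the crux. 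One small loose end: the number of extractions in phase $j$ is an integer, so the per-phase bound should be $\bigl\lceil (n/2^{j+1})^{1-\delta}\bigr\rceil$, which adds at most one extraction per phase, i.e.\ $O(\log n)$ in total over all phases; since $(1-\delta)\log_2 n\le 2^{(1-\delta)\log_2 n}=n^{1-\delta}$, this surplus is itself $O\bigl(n^{1-\delta}/(1-\delta)\bigr)$ and is absorbed by the stated bound. With that remark added, your argument is complete, and as you note it applies verbatim to $k$-tuples since the hypothesis is used as a black box.
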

\begin{proof}
To obtain the required partition, we repeatedly extract a \J{monotonic} sequence of  $n^\delta$ \J{elements} from $S$. 
Thus the number of elements in the partition is determined by the recurrence relation $T(n) = T(n-n^\delta)+1$. We now show that $T(n)$ is upper  bounded by $\frac{dn^{1-\delta}}{1-\delta}$, where  \J{$d =  2(1-\delta) + 1$}.

\J{We choose $m \le 2^{1/\delta}$ as the base case. 
 First observe that in the base case, we have $T(m) = m$, because we can partition $S$ such that each subsequence contains a single \J{element, which is trivially monotonic}. Observe that $m \le 2^{1/\delta} \le  \frac{d2^{1/\delta}}{2(1-\delta)} =   \frac{d2^{(1-\delta)/\delta}}{1-\delta} = \frac{dm^{1-\delta}}{1-\delta}$. This proves the base case.}

\J{We now assume that the inequality $T(k) \le   \frac{dk^{1-\delta}}{1-\delta} $ holds for every $k$ from 1 to $(n-1)$ and then consider the general case. We will use the Bernoulli's inequality, i.e., for real numbers $x,r$, where $0 \le  r \le 1$ and $x \ge -1$, we have $(1+x)^{r}\leq 1+rx$.}

\[
\begin{aligned}
    T(n) &= T(n-n^\delta)+1
         = \left(\dfrac{d}{1-\delta}\right)(n-n^\delta)^{1-\delta}+1\\
         &\le \left(\dfrac{d}{1-\delta}\right)n^{1-\delta}(1 - (1-\delta) n^{\delta-1})+1,  &&\text{ using Bernoulli's inequality}\\ 
         &= \left(\dfrac{dn^{1-\delta}}{1-\delta}\right) - d+1\\
         &\le \left(\dfrac{dn^{1-\delta}}{1-\delta}\right),  &&\text{ for any $d\ge 1$.}         
\end{aligned}
\]

 


\J{
This concludes the prove that $T(n)\le \frac{dn^{1-\delta}}{1-\delta}$, where  \J{$d =  2(1-\delta) + 1$}.
}
 %
\end{proof}




\begin{lemma}\label{lem:ext}
Given a \J{sequence}  $S$ of $n$ integer $k$-tuples, where $k\in o(\log \log n)$,  one can partition $S$ into $O(\frac{n^{1-\delta^k}}{1-\delta^k})$  disjoint subsequences \J{of $k$-tuples} such that each subsequence \J{is monotonic}. 
\end{lemma}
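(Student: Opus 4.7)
The plan is to establish, via iterated application of the Erdős–Szekeres theorem, that every sequence of $n$ integer $k$-tuples contains a monotonic subsequence of length at least $n^{\delta^k}$ with $\delta = 1/2$, and then to feed this bound into Lemma~\ref{lem:recurse} with exponent $\delta^k$.

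First I would prove by induction on $k$ the following claim: every sequence of $n$ integer $k$-tuples admits a monotonic subsequence of length at least $n^{\delta^k}$, where $\delta = 1/2$. The base case $k=1$ is precisely the Erdős–Szekeres theorem, which produces a monotonic subsequence of length $\lceil \sqrt{n}\rceil \ge n^{1/2} = n^{\delta}$. For the inductive step, given a sequence of $n$ $k$-tuples, I would first project to the first coordinate and apply Erdős–Szekeres to extract a subsequence of at least $\lceil n^{\delta}\rceil$ tuples whose first coordinates form a monotonic integer sequence. Restricting the chosen $\lceil n^{\delta}\rceil$ tuples and discarding their (now monotonic) first coordinates yields a sequence of $(k-1)$-tuples of length $\lceil n^{\delta}\rceil$, to which the inductive hypothesis applies, producing a sub-subsequence of length at least $(n^{\delta})^{\delta^{k-1}} = n^{\delta^k}$ that is monotonic in dimensions $2,\ldots,k$. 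Since it is a sub-subsequence of a first-coordinate-monotonic run, it is monotonic in every dimension, proving the claim.

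Next I would apply Lemma~\ref{lem:recurse} to $S$ with exponent $\delta' := \delta^k = 2^{-k}$. The hypothesis of Lemma~\ref{lem:recurse} is exactly the claim just proved, and the assumption $n \ge 2^{1/\delta'}$ reads $n \ge 2^{2^k}$, i.e., $2^k \le \log n$, which is guaranteed by the hypothesis $k \in o(\log \log n)$. Lemma~\ref{lem:recurse} then partitions $S$ into
\[
O\!\left(\frac{n^{1-\delta^k}}{1-\delta^k}\right)
\]
disjoint monotonic subsequences of $k$-tuples, which is the bound claimed by the lemma.

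The argument is largely routine once the right exponent is identified; the only real care-point (and hence the main obstacle) is book-keeping the range of validity. Specifically, the base-case hypothesis $n \ge 2^{1/\delta}$ in Lemma~\ref{lem:recurse}, when applied with $\delta$ replaced by $\delta^k = 2^{-k}$, becomes the doubly-exponential condition $n \ge 2^{2^k}$, and this is precisely where the restriction $k \in o(\log \log n)$ is consumed. The minor integrality issues (taking $\lceil n^{\delta} \rceil$ rather than $n^{\delta}$ at each inductive step) are absorbed into the asymptotic notation and do not affect the final bound.
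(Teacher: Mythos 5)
Your proposal is correct and follows essentially the same route as the paper: iteratively extract, dimension by dimension, a subsequence monotonic in all $k$ coordinates of length $n^{\delta^k}$, then feed that guarantee into Lemma~\ref{lem:recurse}, with the hypothesis $k\in o(\log\log n)$ consumed exactly where you say. The only cosmetic differences are that you hard-code $\delta=1/2$ (for which, since the entries are integers with possible repetitions, the one-dimensional step should cite Lemma~\ref{lem:simple} rather than Theorem~\ref{thm:erdos}, which assumes distinct reals), whereas the paper keeps $\delta$ generic and inherits it from the hypothesis of Lemma~\ref{lem:recurse}.
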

\begin{proof}
 We first extract $n^\delta$ $k$-tuples such that the integer sequence  in the first dimension \J{is monotonic}. 
 We then repeat the process for every dimension, each time on the newly selected $k$-tuples. Therefore, one can  find a subsequence of at least $n^{\delta^\kappa}$ $k$-tuples such that
 they \J{are monotonic}. 
 
Since a single $k$-tuple is trivially \J{monotonic}, 
and since the term $n^{\delta^\kappa}$ is not a constant for $k\in o(\log \log n)$, we can apply Lemma~\ref{lem:recurse} with subsequence size $n^{\delta^\kappa}$. Consequently, we can partition $S$ into $O(\frac{n^{1-\delta^k}}{1-\delta^k})$ disjoint subsequences \J{of $k$-tuples}, \J{which are monotonic}. 
\end{proof}
  
\J{We will also use the following properties of ordered sets of real numbers and integers.}

 \begin{theorem}[Erd\H{o}s--Szekeres~\cite{classic}]\label{thm:erdos}
 Given an ordered set of $(n^2+1)$ distinct real numbers, one can always choose a monotonic subsequence of size at least \J{$n+1$}.
 \end{theorem}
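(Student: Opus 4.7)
The plan is to apply the classical pigeonhole argument to a two-coordinate labeling of the sequence. Writing the input as $a_1, a_2, \ldots, a_{n^2+1}$, I would associate to each index $i$ a pair $(u_i, d_i)$ of positive integers, where $u_i$ is the length of the longest non-decreasing subsequence ending at $a_i$ and $d_i$ is the length of the longest non-increasing subsequence ending at $a_i$. The theorem will follow once I show $\max_i u_i \ge n+1$ or $\max_i d_i \ge n+1$.

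The crucial step is to verify that the pairs $(u_i, d_i)$ are pairwise distinct. For any $i < j$, since the numbers are distinct, either $a_i < a_j$ or $a_i > a_j$. In the first case, appending $a_j$ to a witness for $u_i$ produces a non-decreasing subsequence of length $u_i + 1$ ending at $a_j$, so $u_j > u_i$. A symmetric argument in the second case gives $d_j > d_i$. Either way $(u_i, d_i) \ne (u_j, d_j)$.

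Once distinctness is established, the remainder is a one-line pigeonhole argument: if every monotonic subsequence had length at most $n$, then each pair would lie in $\{1,\ldots,n\}^2$, a set of cardinality $n^2$, contradicting the existence of $n^2+1$ pairwise distinct pairs. Hence some $u_i$ or $d_i$ is at least $n+1$, providing the required monotonic subsequence. I do not foresee any real obstacle: the only nontrivial move is the distinctness observation for the pairs, and once that is in hand the pigeonhole count is immediate.
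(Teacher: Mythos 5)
Your proof is correct: it is the standard Seidenberg pigeonhole argument for the Erd\H{o}s--Szekeres theorem, with the distinctness of the pairs $(u_i,d_i)$ established exactly as needed and the count $n^2 < n^2+1$ forcing some coordinate to reach $n+1$. The paper itself gives no proof of this statement---it is cited as a classical result from the literature---so there is nothing to compare against; your argument is a complete and self-contained justification of the cited theorem.
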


 \begin{lemma} \label{lem:simple}
 Given an ordered multiset $S$ of $n$ integers, it is always possible to choose a monotonic subsequence of size at least $\sqrt{n}$.
 \end{lemma}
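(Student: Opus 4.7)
The plan is to reduce the multiset statement to the distinct-reals version that is already available as Theorem~\ref{thm:erdos}. Given the multiset $S = (a_1, \ldots, a_n)$ of integers, I would break ties by adding a tiny, position-dependent real perturbation: fix $\epsilon$ with $0 < \epsilon < 1/n$ and define $b_i = a_i + \epsilon \cdot i$ for $i = 1, \ldots, n$. Because the offsets $\epsilon \cdot i$ are pairwise distinct, the $b_i$ form a sequence of $n$ pairwise distinct real numbers.

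Next I would invoke Theorem~\ref{thm:erdos} on $(b_1, \ldots, b_n)$. Since $(\lceil \sqrt{n}\rceil - 1)^2 + 1 \le n$, the theorem supplies a strictly monotonic subsequence $(b_{i_1}, \ldots, b_{i_m})$ with $i_1 < \cdots < i_m$ and $m \ge \sqrt{n}$. The final step is to pull this subsequence back to $S$. If $(b_{i_j})_{j=1}^{m}$ is strictly increasing, then for every consecutive pair we have $a_{i_j} < a_{i_{j+1}} + \epsilon(i_{j+1} - i_j)$. The perturbation term satisfies $\epsilon(i_{j+1} - i_j) < \epsilon n < 1$ while both $a_{i_j}$ and $a_{i_{j+1}}$ are integers, so $a_{i_j} \le a_{i_{j+1}}$; thus $(a_{i_1}, \ldots, a_{i_m})$ is non-decreasing. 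The strictly decreasing case is symmetric and yields a non-increasing subsequence of the same length.

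The argument has no genuinely hard step; the only point requiring attention is calibrating the perturbation so that it can never cross the integer gap of size $1$, which is precisely what the bound $\epsilon < 1/n$ guarantees. As a self-contained alternative one could bypass the perturbation entirely and run the standard Dilworth-style proof of Erd\H{o}s--Szekeres in the non-strict setting: assign to each index $i$ the pair $(x_i, y_i)$ consisting of the lengths of the longest non-decreasing and non-increasing subsequences ending at position $i$, observe that whenever $i < j$ either $a_i \le a_j$ (forcing $x_j > x_i$) or $a_i \ge a_j$ (forcing $y_j > y_i$)---and indeed both whenever $a_i = a_j$---so all $n$ pairs are distinct, and then deduce $n \le \max_i x_i \cdot \max_i y_i$ to get a monotonic chain of length at least $\sqrt{n}$. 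Either route yields the claimed bound.
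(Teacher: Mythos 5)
Your argument is correct and follows essentially the same route as the paper: both proofs perturb the integers into $n$ distinct reals (the paper adds $j/(|S_i|+1)$ to the $j$th occurrence of each value, you add $\epsilon\cdot i$ with $\epsilon<1/n$), apply Theorem~\ref{thm:erdos}, and pull the monotonic subsequence back to the original multiset. The only difference is the choice of tie-breaking perturbation, and your verification that the perturbation cannot cross an integer gap is the same observation the paper leaves implicit.
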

\begin{proof}
Let $I$ be the set of distinct integers in $S$. For each integer $i\in I$, let $S_i$ be the ordered subset consisting of all occurrences of $i$ in $S$. Let $x$ be the $j$th element of $S_i$. Add the number $j/(|S_i|+1)$ to $x$. After processing all the elements of $I$, we obtain a set $S'$ of $n$ distinct numbers. By Erd\H{o}s--Szekeres  theorem~\cite{classic}, we now can choose a monotonic subsequence of size at least $\sqrt{n}$ from $S'$. 
\end{proof} 

\section{Simultaneous Embedding of Colored Graphs}
\label{sec:draw}
In this section we describe our result on colored simultaneous embedding. By Remark~\ref{rem:1}, it suffices to concentrate on computing uphill drawings of the spinal paths with low bend complexity. 


\begin{lemma}\label{lem:fuzzy}
Let $Q$ be a set of $k$ spinal paths, each with $n$ vertices. Assume that the paths are compatibly colored with $n^\alpha$ colors, where $0\le \alpha \le  1$. Then one can simultaneously draw $Q$ such that each path is drawn uphill with bend complexity  $O(\min\{n^\alpha, n^{1-(1/2)^k}\})$.
\end{lemma}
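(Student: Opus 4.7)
The plan is to prove the two claimed upper bounds, $O(n^\alpha)$ and $O(n^{1-(1/2)^k})$, by two separate constructions and keep whichever yields fewer bends. For the common setup, use the compatible colouring to fix colour-preserving bijections identifying vertices across the $k$ paths, yielding $n$ super-vertices, each carrying a colour and a $k$-tuple $(p_1,\ldots,p_k)$ of its positions along $P_1,\ldots,P_k$; each $p_i$ is a permutation of $[n]$. The task becomes to place the super-vertices geometrically and, for each $i$, draw an uphill polyline visiting them in the order prescribed by $p_i$.

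To obtain the $O(n^{1-(1/2)^k})$ bound, apply Lemma~\ref{lem:ext} to the sequence of super-vertices viewed as $k$-tuples of positions; taking $\delta=1/2$ (supplied by Lemma~\ref{lem:simple} for integer sequences) yields a partition of the super-vertices into $t=O(n^{1-(1/2)^k})$ groups $G_1,\ldots,G_t$, each monotonic in every dimension. Place $G_j$ on an ascending diagonal inside a horizontal strip $S_j$ with the strips stacked bottom to top. Monotonicity of $G_j$ in dimension $i$ guarantees that $P_i$ traverses $G_j$ in either left-to-right or right-to-left $x$-order along the diagonal, so within-strip edges can be drawn as straight segments. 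Between-strip edges are routed over the upper envelope of the partial polyline following the Durocher--Mondal routing scheme; because the layout is strip-aligned, the envelope retains $O(t)$ breakpoints and each such edge needs $O(t)$ bends.

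To obtain the $O(n^\alpha)$ bound, reuse the strip/diagonal template but place one strip per colour, giving $n^\alpha$ strips in total. For each path $P_i$, send its $r$th visit to colour $q$ to the $r$th location along colour $q$'s diagonal; this makes within-colour traces automatically monotone in both coordinates, so the partition into $n^\alpha$ monotone classes is again valid for all $k$ paths simultaneously. The same routing argument bounds the bend count per edge by the number of strips, namely $n^\alpha$.

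The main technical obstacle is converting a monotonic partition of the super-vertices into an uphill drawing whose per-edge bend count equals the number of strips rather than, e.g., the total number of previously drawn edges. This reduces to maintaining an invariant that the upper envelope of the partial drawing gains only $O(1)$ breakpoints per strip---an invariant that constrains exactly where the within-strip diagonals sit and where the between-strip bends are placed. Making this rigorous, and handling the interleaving of one path's visits across many strips while preserving uphillness at every moment, is the step that most requires care.
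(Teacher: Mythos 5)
Your proposal follows essentially the same route as the paper: two separate constructions whose minimum gives the bound, with the $O(n^{1-(1/2)^k})$ case obtained by feeding the position $k$-tuples to Lemma~\ref{lem:ext} with $\delta=1/2$ from Lemma~\ref{lem:simple}, placing each monotonic group contiguously, and charging $O(1)$ bends per group skipped, and the $O(n^\alpha)$ case obtained by one contiguous block per color with vertices sent to the leftmost available (equivalently, $r$th-visit-to-$r$th-location) slot. The one substantive divergence is your geometric layout: the paper places \emph{all} vertex locations on a single horizontal line and lets the polyline dip below it, precisely so that no vertex location ever lies below an already-drawn portion of the path; your strips stacked bottom to top reintroduce that danger, since a vertex in a low strip visited after the polyline has passed overhead in a higher strip has an upward ray hitting the earlier part of the drawing, violating the uphill condition you must certify. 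You flag this yourself as the step requiring care; the fix is simply to collapse the strips onto one line (or a globally monotone staircase), after which your routing and counting arguments coincide with the paper's.
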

\begin{proof}
We consider the following two cases.   

\textbf{Case 1 ($0\le \alpha \le 1/2)$:} In this case we construct the vertex locations on a horizontal line and assign the vertices of the same color a subset of contiguous locations. We now show that every path admits an uphill drawing with bend complexity $O(n^\alpha)$   on these vertex locations. Let $P=\{x_1,x_2,\ldots,x_n\}$ be a path in $Q$ and let $C(x_i)$, where $1\le i\le n$, be the contiguous vertex locations for the color of $x_i$. We compute an uphill drawing of $P$, as follows. We map the vertex $x_1$ to the leftmost vertex location of $C(x_1)$. For each $i$ from $2$ to $n$, we map $x_i$ to the   leftmost available vertex location of $C(x_i)$. We then draw the edge  $(x_{i-1},x_i)$ with an $x$-monotone polyline $L$ such that $L$ does not create any edge crossing with the drawing of $x_1\ldots x_{i-1}$, and the unmapped vertex locations lie above $L$.

Fig.~\ref{fig:spinal} illustrates the construction, where the mapped locations for each color are shown in shaded rectangles. Since we always map a new vertex $v$ to the leftmost available location in set $C(v)$, the mapped locations for each color occupy a contiguous subset of vertex locations. We can use this property, to route an edge from one color set $C(v)$ to another color set $C(w)$, using $O(b)$ bends, where $b$ is the number of distinct colors between these two sets. We need $O(1)$ bends to skip each intermediate color set (e.g., see the bold edge in Fig.~\ref{fig:spinal}(f), where the intermediate color sets are  skipped by `jumping' over  the shaded rectangles). Since there are at most $n^\alpha$ colors, one can construct $L$ with $O(n^\alpha)$ bends. 
\begin{figure}[pt]
\includegraphics[width=\textwidth]{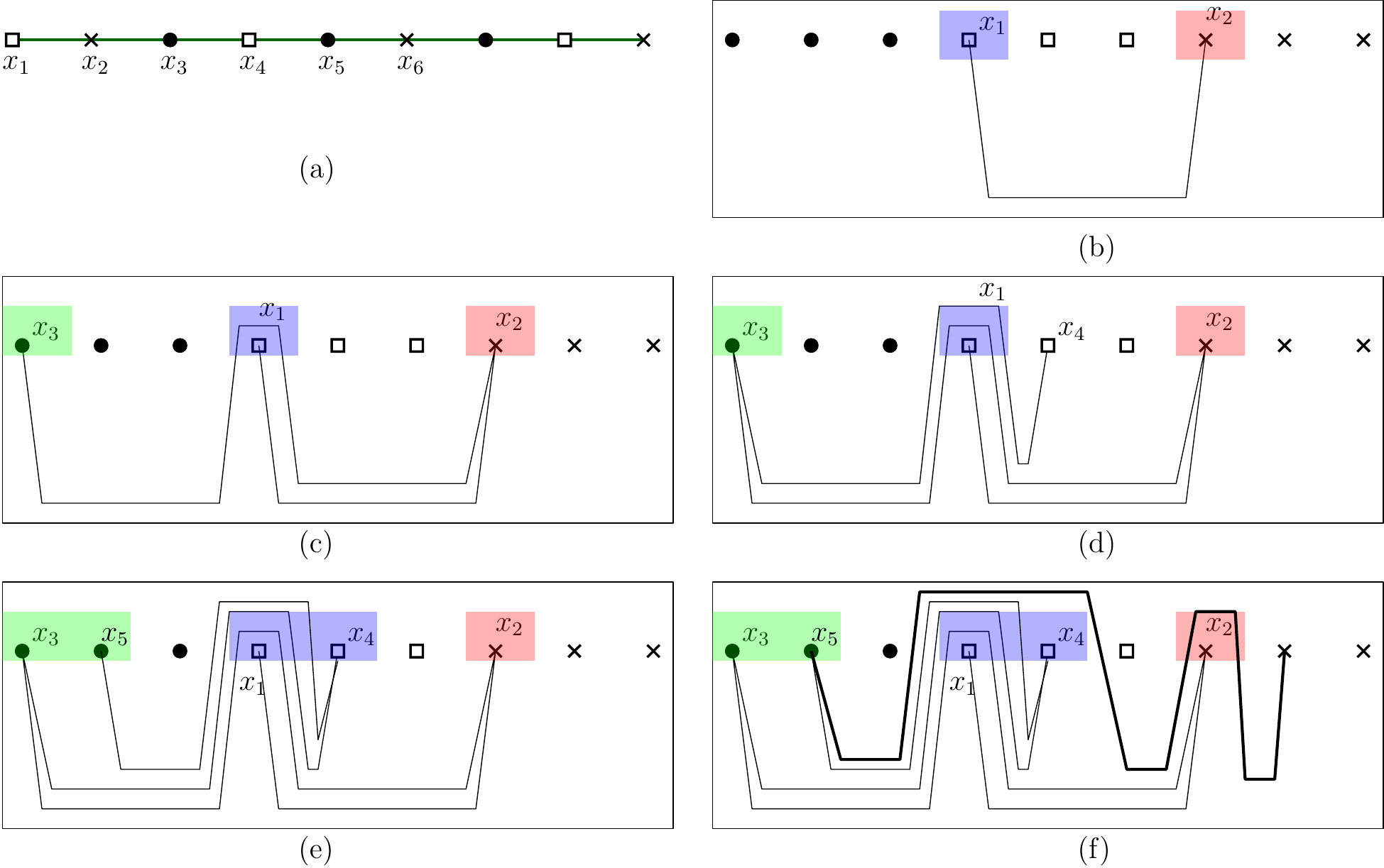}
\caption{(a) A spinal path $P$. (b)--(f)  Construction of an uphill drawing of $P$. }
\label{fig:spinal}
\end{figure}

\textbf{Case 2 ($1/2<\alpha\le 1$):} We first label the vertices of each spinal path  with unique integers from $1$ to $n$  such that for every label $\ell$, the vertices with label $\ell$ in all the paths have the same color (Fig.~\ref{fig:spinal2}(a)). We then create a set $I$ of $n$ integer $k$-tuples, each of size $k$. The $i$th entry of the $j$th $k$-tuple, where $1\le i\le k$ and $1\le j\le n$, is the position of the $j$th label in the $i$th spinal path.  Fig.~\ref{fig:spinal2}(b) illustrates the construction of  $I$ for the paths of  Fig.~\ref{fig:spinal2}(a). Note that the colors of the vertices do not play any role in this construction. 
\begin{figure}[pt]
\includegraphics[width=\textwidth]{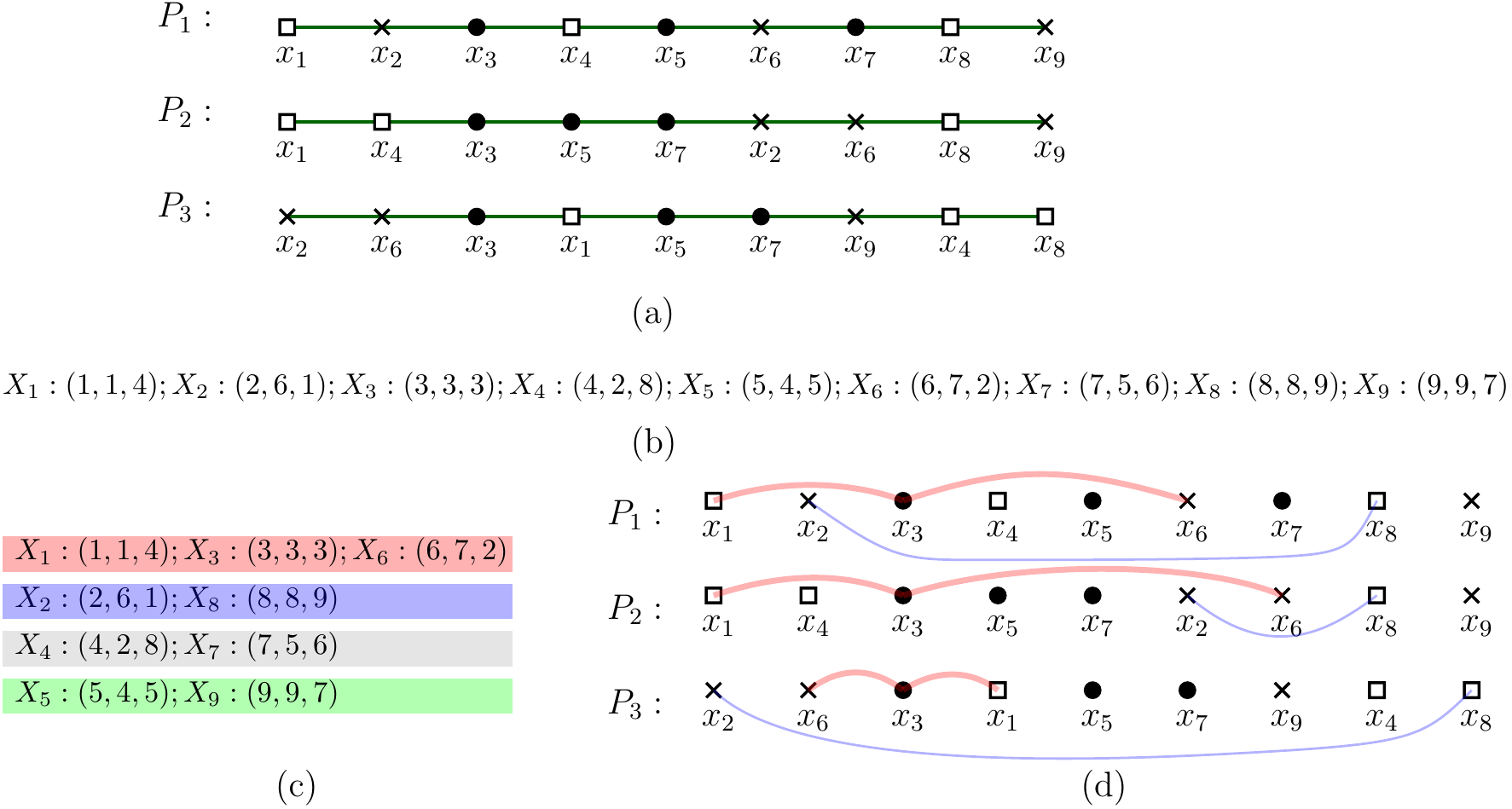}
\caption{(a) A set of spinal paths.  (b) Construction of $I$. (c) Partition of the paths into monotonic sequences. (d) Illustration for the monotonicity for the first two elements of the partition.}
\label{fig:spinal2}
\end{figure}

By Lemma~\ref{lem:simple}, for any multiset of $n$ integers, we can find a monotonic subsequence of length at least  $n^{1/2}$. By Lemma~\ref{lem:ext}, one can partition $I$ into  $O\left(\frac{n^{1-(1/2)^k}}{1-(1/2)^k}\right) \in O(n^{1-(1/2)^k})$ disjoint subsequences, where each subsequence is monotonic. Fig.~\ref{fig:spinal2}(c)--(d)  illustrate such a partition for the sequence $I$ of  Fig.~\ref{fig:spinal2}(b). 

 We now construct the vertex locations on a horizontal line and assign the vertices of the same subsequence a subset of contiguous locations. Fig.~\ref{fig:complex}(b)  illustrates the point set determined by the monotonic sequences of Fig.~\ref{fig:spinal2}(c). \J{The monotonic sequences are $(X_1,X_3,X_6), (X_2,X_8), (X_4,X_7)$ and $(X_5,X_9)$.} We now show that every path admits an uphill drawing with $O(n^{1-(1/2)^k})$ bends per edge on these vertex locations.

Let $P=\{x_1,x_2,\ldots,x_n\}$ be a path in $Q$ and let $C'(x_i)$, where $1\le i\le n$, be the contiguous vertex locations for the subsequence containing $x_i$. We compute an uphill drawing of $P$, as follows. \J{We first map the vertex $x_1$ to its corresponding vertex location $X_1$. For each $i$ from $2$ to $n$, we first map $x_i$ to its corresponding vertex location $X_i$, and then   draw the edge  $(x_{i-1},x_i)$ with an $x$-monotone polyline $L$ such that $L$ does not create any edge crossing with the drawing of $x_1\ldots x_i$, and the unmapped  locations lie above $L$.} 
\begin{figure}[h]
\includegraphics[width=\textwidth]{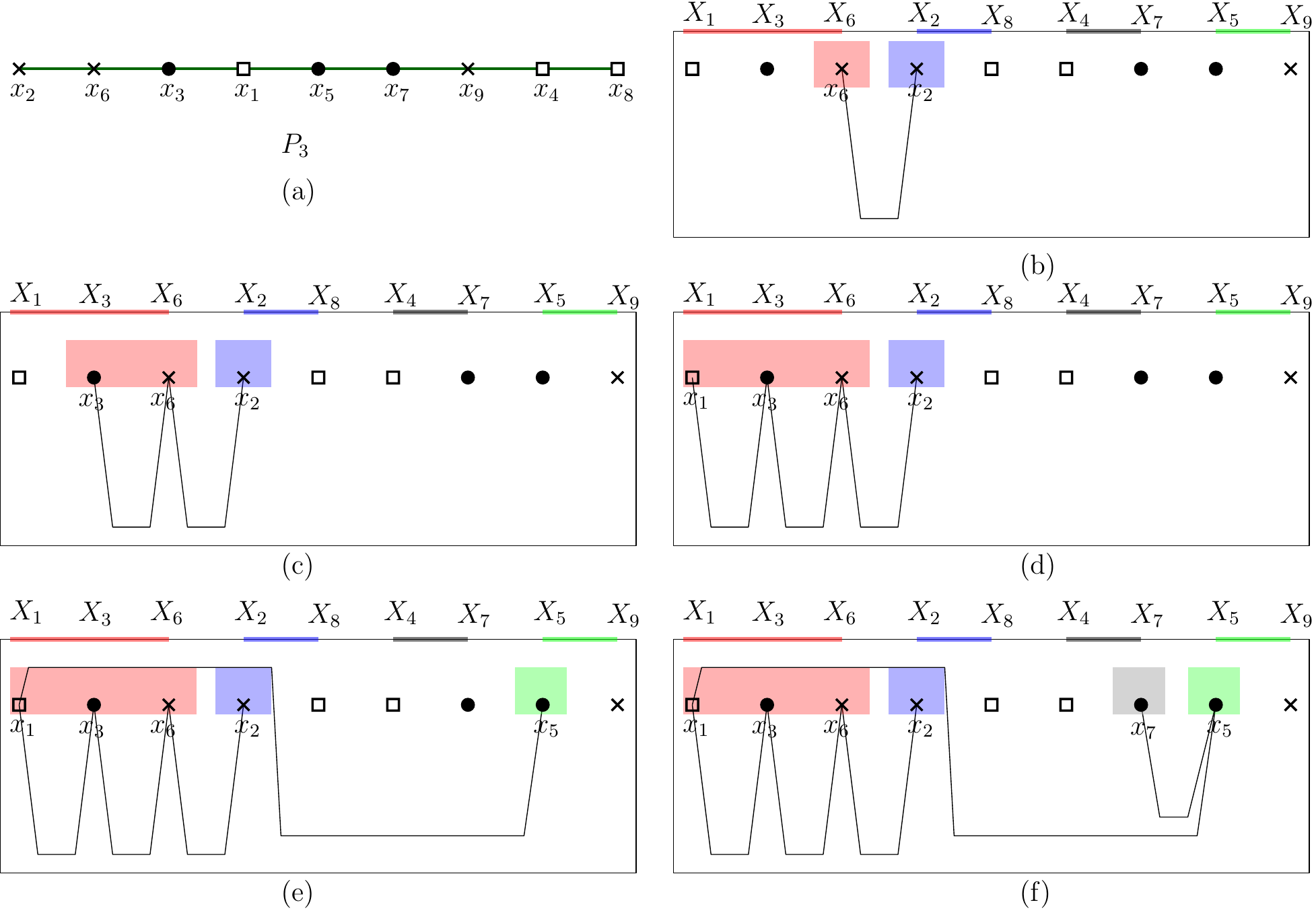}
\caption{(a) The spinal path $P_3$ of Fig.~\ref{fig:spinal2}, and (b)--(f) it's uphill drawing on the point set determined by the monotonic sequences.  }
\label{fig:complex}
\end{figure}
%

The monotonicity of the set $C'(x_i)$ ensures that the  mapped locations of $C'(x_i)$ remain  consecutive. Therefore, 
one can route the edge from one monotonic sequence $C'(x_{i-1})$ to another monotonic sequence $C'(x_i)$, using $O(b)$ bends, where $b$ is the number of distinct monotonic    sequences between these two monotonic sequences.
 Fig.~\ref{fig:complex}(b)--(f) illustrate the  computation of an uphill drawing for the path of Fig.~\ref{fig:complex}(a). The mapped locations for each monotonic sequence are enclosed in shaded rectangles.

We need $O(1)$ bends to skip each intermediate monotonic sequence (e.g., the intermediate monotonic sequences can be skipped by `jumping' over  the shaded rectangles). Since there are at most $O(n^{1-(1/2)^k})$  monotonic sequences, we can construct $L$ with $O(n^{1-(1/2)^k})$ bends. 
\end{proof}

\begin{figure}[h]
\includegraphics[width=.9\textwidth]{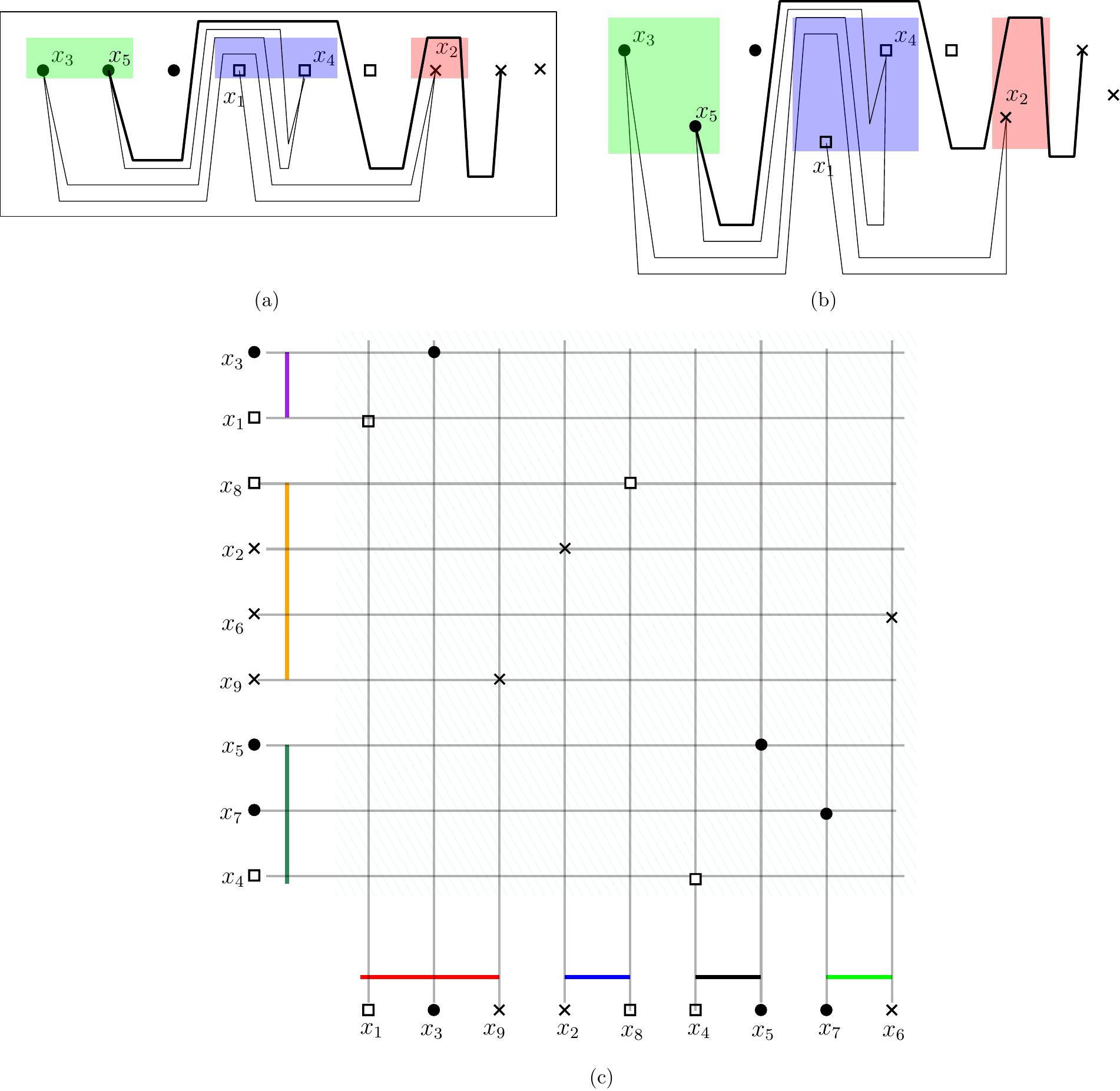}
\caption{(a)--(b) Illustration for computing an uphill drawing when the points are not on a line. (c) Construction of the point set $S$. }
\label{fig:half}
\end{figure}

Lemma~\ref{lem:fuzzy} and Remark~\ref{rem:1} together implies that any set of $k$ graphs, compatibly colored with $c$ colors, admits a simultaneous embedding with bend complexity  $O(\min\{c, n^{1-(1/2)^k}\})$. This can be  \J{further improved} by the following 
 observations (Obs 1-2). \J{These observations are made in~\cite{DBLP:journals/siamdm/DurocherM18}, which generalize a standard technique of computing simultaneous geometric embedding of two paths~\cite{DBLP:journals/comgeo/BrassCDEEIKLM07} to drawing any number of paths with bends.} 
\begin{description}
    \item[Obs1.] The vertex locations of an uphill drawing can be moved vertically and the edges can be redrawn such that the new drawing is also uphill with the same bend complexity. Fig.~\ref{fig:half} (a)--(b) illustrate such an example.
    \item[Obs2.] Half of the spinal paths can be drawn along the x-axis, and the other half along the y-axis. 
\end{description}
\J{While Obs 1 is straightforward to verify, we  give a brief overview of Obs 2 to make the paper self-contained.}

Given a set $Q$ of $k$ spinal paths, first partition them into two sets $Q_1$ and $Q_2$ such that  $Q_1$ contains $\lceil k/2\rceil$ spinal  paths, and $Q_2$ the rest of the spinal paths. Let $I_j$, where $1\le j\le 2$, be the set of integer $k$-tuples constructed for $Q_j$, and let $M_j$ be the partition of $I_j$ into monotonic sequences. Now construct a point set along the x-axis using $M_1$, and another point set along the y-axis using $M_2$ (Fig.~\ref{fig:half}(c)). The final point set is constructed by taking for each label, the point with $x$ and $y$-coordinates equal to the positions  of the label along the x-axis and y-axis, respectively.  Fig.~\ref{fig:half}(c) depicts the final point set $S$ on an integer  grid. \J{We have two monotonic sequences,  one along x-axis and the other along y-axis. We now draw $Q_1$  along the x-axis, and $Q_2$  along the y-axis. The  bend complexity of the drawing of  $Q_1$ does not interfere with bend complexity of the drawing of $Q_2$. Therefore, the bend complexity of the resulting   drawing are determined by only half of the spinal paths. 
}  

By Lemma~\ref{lem:fuzzy} and Remark~\ref{rem:1}, we now have the following theorem.
 %

\begin{theorem}
\label{thm:specific}
Given  $k$ compatibly colored planar graphs, each with $n$ vertices, one can compute a simultaneous embedding of these graphs with bend complexity  $O(\min\{c,n^{1-1/\gamma}\})$, where $c$ is the   number of distinct colors  and $\gamma = 2^{\lceil k/2 \rceil}$.  
\end{theorem}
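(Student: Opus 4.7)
The plan is to reduce the graph case to the path case and then exploit the two-axes splitting (Obs 1 and Obs 2) to halve the number of paths that control the exponent. First I would replace each input graph $G_i$ by a monotone topological book embedding, extract its spinal path $P_i$, and (by inserting dummy division vertices at the ends of the spines) make the resulting $k$ spinal paths compatibly colored with the same color set plus one new color for the division vertices. By Remark~\ref{rem:1}, it suffices to produce uphill drawings of $P_1,\dots,P_k$ on a common point set $S$ with bend complexity $O(\min\{c, n^{1-1/\gamma}\})$, since then the graphs inherit the same asymptotic bend complexity.

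Next I would split the spinal paths into $Q_1$, containing $\lceil k/2 \rceil$ of them, and $Q_2$, containing the remaining $\lfloor k/2 \rfloor$. Following Obs 2, I would build two one-dimensional point configurations: one along the $x$-axis, obtained by running the construction of Lemma~\ref{lem:fuzzy} on $Q_1$ (either the contiguous-color packing when $c \le \sqrt{n}$ or the monotone-subsequence partition of the $k$-tuple sequence when $c > \sqrt{n}$), and an analogous one along the $y$-axis, obtained from $Q_2$. The final point set $S$ places, for each vertex label, the point whose $x$-coordinate is its position on the $x$-axis configuration and whose $y$-coordinate is its position on the $y$-axis configuration. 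Because $S$ sits on an integer grid whose rows and columns correspond exactly to the two one-dimensional embeddings, I can use Obs 1 to rearrange vertical positions freely without increasing bend complexity; this lets me draw the paths of $Q_1$ as uphill drawings with respect to the $x$-axis and simultaneously draw the paths of $Q_2$ as uphill drawings with respect to the $y$-axis, using the corresponding one-dimensional constructions verbatim.

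Now I would apply the bend count of Lemma~\ref{lem:fuzzy} separately to each side. Since $|Q_1|, |Q_2| \le \lceil k/2 \rceil$, each side contributes bend complexity $O(\min\{c, n^{1-(1/2)^{\lceil k/2 \rceil}}\}) = O(\min\{c, n^{1-1/\gamma}\})$, where $\gamma = 2^{\lceil k/2 \rceil}$. Because the drawings of $Q_1$ and $Q_2$ share only vertex locations, and each draws edges that only use one of the two axis directions, the two halves cannot inflate one another's bend counts. Combined with Remark~\ref{rem:1}, this yields the claimed bound for the original graphs, and the hypothesis $k \in o(\log\log n)$ inherited from Lemma~\ref{lem:ext} ensures that the monotone-subsequence partition is actually sublinear.

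The main obstacle I anticipate is making Obs 1 precise enough to justify that the $x$-axis and $y$-axis constructions can be overlaid without destroying either uphill property: one needs to verify that the column of grid points corresponding to a fixed $x$-coordinate can be freely reordered vertically (and symmetrically for rows) while leaving every unmapped point strictly above the partially drawn polyline in the relevant direction. Once this routing freedom is nailed down, the rest is just bookkeeping using the already established lemmas.
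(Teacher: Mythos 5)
Your proposal is correct and follows essentially the same route as the paper: reduce to uphill drawings of spinal paths via Remark~\ref{rem:1}, apply Lemma~\ref{lem:fuzzy} to each of the two halves $Q_1$ and $Q_2$, and use Obs~1 and Obs~2 to overlay the $x$-axis and $y$-axis constructions on a common grid so that only $\lceil k/2\rceil$ paths govern the exponent. The obstacle you flag (making Obs~1 precise) is exactly the point the paper also treats only informally.
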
 

\section{Trading Bend Complexity with Vertex Locations}\label{sec:simple}
In this section we show that allowing more than $n$ vertex locations help improve the bend complexity. In particular, $n\lceil c/b \rceil$ vertex locations, where $b\ge 1$,   suffice to embed any set of compatibly colored $n$-vertex planar graphs with bend complexity $O(b)$, where $c$ is the number of colors.

Let $Q$ be a set of compatibly colored spinal paths, each with $n$ vertices. We first partition the colors into at most $b$ disjoint sets  $C_1,C_2,\ldots C_b$ such that each contains at most $\lceil c/b \rceil$ colors. For each $1\le j\le b$, let $N_j$ be the number of vertices,  \J{whose color belong  to} $C_j$ (in the spinal path). Then $n = \sum_{1\le j\le b} N_j$.
 By a \emph{chain of $C_j$}, we denote a point set of $|C_j|$ points, each colored by a distinct color from $C_j$.  We now create a point set $S$ of size $n\lceil c/b \rceil$ along the x-axis, as follows.


The points in $S$ are arranged into  $b$ subsets $S_1,\ldots,S_b$, each contains a set of contiguous points of $S$. The set $S_j$  consists of $N_j$ chains of $C_j$, and thus $N_j|C_j| = N_j\lceil c/b \rceil$ points. Fig.~\ref{fig:color} illustrates such an example with 5 colors. We now show that the uphill drawing of a spinal path can be computed with $O(b)$ bends per edge.

\begin{figure}[h]
\includegraphics[width=\textwidth]{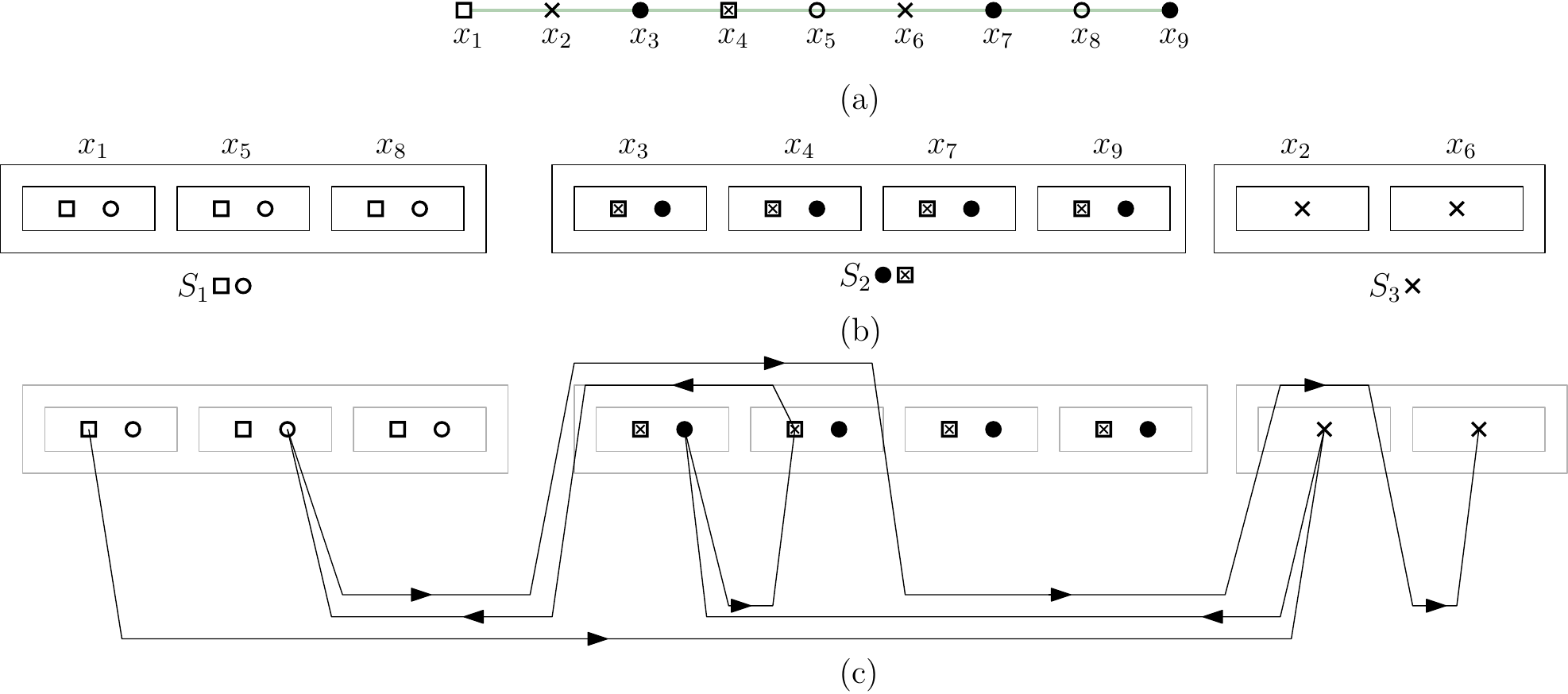}
\caption{(a) A spinal path $P$ to illustrate $N_j$ for different colors. (b) Construction of $S$, with a mapping for the vertices of $P$.   (c) Illustration for an uphill drawing of $P$ on $S$. }
\label{fig:color}
\end{figure}

For each $C_j$, we map the corresponding vertices of the spinal path from left to right on the chains of $S_j$. We now construct the drawing following the same approach as in the proof of Lemma~\ref{lem:fuzzy}, where we treat each chain as a single vertex location (i.e., from each chain, we use only one point for mapping). At each step of the construction, the mapped chains of each subset $S_j$ remain consecutive. Since there are at most $b$ such subsets, the edges can be routed with $O(b)$ bends per edge.   By Remark~\ref{rem:1}, we now have the following theorem.
  
\begin{theorem}
\label{thm:specific}
Given a set of compatibly colored planar graphs, each with $n$ vertices, there exists an $O(b)$-bend universal point set  of $n\lceil c/b\rceil$ points for these graphs, where $c$ is the number of distinct colors in the input and $b\ge 1$. 
\end{theorem}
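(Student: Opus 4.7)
My plan is to reduce the problem to computing uphill drawings of the spinal paths, just as in the proof of Theorem~\ref{thm:specific} (the first one), and then imitate the argument of Case~1 of Lemma~\ref{lem:fuzzy} at a coarser granularity: instead of treating individual vertex locations as the atomic objects to be ``jumped over,'' I would treat entire color-chains as atomic, so that the number of things an edge must cross drops from $c$ to $b$.

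First, I would partition the $c$ colors into $b$ blocks $C_1,\ldots,C_b$ with $|C_j|\le\lceil c/b\rceil$. Because the input is compatibly colored, the number $N_j$ of vertices whose color lies in $C_j$ is the same in every spinal path; in particular $\sum_j N_j=n$. I would then build the point set $S$ on the $x$-axis as a concatenation $S_1,\ldots,S_b$, where $S_j$ consists of $N_j$ consecutive \emph{chains of $C_j$}, each chain being $|C_j|$ points assigned the $|C_j|$ distinct colors of $C_j$ in some fixed order. The total size is $\sum_j N_j|C_j|\le\sum_j N_j\lceil c/b\rceil=n\lceil c/b\rceil$, matching the claimed bound.

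Next, given any spinal path $P=(x_1,\ldots,x_n)$ (from any of the input graphs), I would traverse $P$ from left to right and, whenever $x_i$ has a color in $C_j$, place $x_i$ on the leftmost still-unused chain of $S_j$, using the unique point of that chain matching the color of $x_i$. The key invariant to maintain is the one from Case~1 of Lemma~\ref{lem:fuzzy}: at every step the set of already-used chains inside each $S_j$ forms a contiguous prefix of the chains of $S_j$. Given this, I can route the edge $(x_{i-1},x_i)$ as an $x$-monotone polyline that stays below the unmapped locations by ``jumping'' over each intermediate block $S_{j'}$ with $O(1)$ bends; since there are at most $b$ blocks, each edge needs $O(b)$ bends, and the drawing is uphill. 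This yields an $O(b)$-bend uphill drawing of every spinal path on the common point set $S$, and applying Remark~\ref{rem:1} upgrades this to a simultaneous embedding of the original graphs with bend complexity $O(b)$.

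The only nontrivial point, and the one I would argue carefully, is that the same point set $S$ works simultaneously for all $k$ input graphs. This is where compatibility of colors does the work: the number of chains $N_j$ in $S_j$ is defined by the color multiplicities (not by any particular path), so for every input spinal path the greedy left-to-right placement within each block $S_j$ exhausts exactly $N_j$ chains and never overflows. Everything else — monotone placement, contiguity of used chains within each $S_j$, $O(1)$-bend skipping of an intermediate block, and the conversion from uphill path drawing to graph drawing — is a direct reuse of Case~1 of Lemma~\ref{lem:fuzzy} and Lemma~\ref{lem:paths}, so no new geometric construction is needed.
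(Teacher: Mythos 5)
Your proposal is correct and follows essentially the same route as the paper: partition the colors into $b$ blocks, build $S$ as a concatenation of $N_j$ chains per block, map vertices greedily to the leftmost unused chain of the appropriate block, and reuse the contiguity/jumping argument of Case~1 of Lemma~\ref{lem:fuzzy} together with Remark~\ref{rem:1}. The only addition is your explicit remark that compatibility guarantees the counts $N_j$ are path-independent, which the paper leaves implicit.
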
 

\section{Conclusions and Directions for Future Work}
\label{sec:conclusion} 
In this paper we show that colored simultaneous embedding of $k\in o(\log \log n)$ compatibly colored graphs, each with $n$ vertices, can be computed on   $n$ vertex locations and with sublinear bend complexity. The running time of our approach is polynomial in $n$ and $k$, which is determined by the time for partitioning   an ordered set of $n$ integer $k$-tuples into monotonic subsequences (Lemma~\ref{lem:ext}). We also show that any number of $n$-vertex compatibly colored graphs can be embedded on a set of $n\lceil c/b\rceil$ points with bend complexity $O(b)$, where $c$ is the number of colors and $b\ge 1$.         

The existence of a monotonic subsequence of size $\sqrt{n}$ in any integer sequence was the key to improve the bend complexity in  Section~\ref{sec:draw}. However, in a colored simultaneous embedding, the corresponding sequences are integer multisets. Hence we attempted to generalize  
Erd\H{o}s--Szekeres theorem~\cite{classic} to compute large monotonic sequences in a   multiset of positive integers. We were able to prove the 
existence of a monotonic subsequence of size  $\max\{\sqrt{n},  \sqrt{c} +\frac{n}{c}-2\}$ (Lemma~\ref{lem:main}), which is larger than $\sqrt{n}$ only when the number of distinct integers $c$ is smaller than $\sqrt{n}$. It would be interesting to investigate multisets where the number of distinct integers is larger than $\sqrt{n}$.

\begin{lemma}
\label{lem:main}
Let $S$ be an ordered multiset of $n$ integers, where  $n = kc+1$, $c$ is the number of distinct integers in $S$, and $k$ is a positive integer. 
Assume that $c$ is a perfect square. Then $S$ contains a monotonic   subsequence of size at least $\max\{\sqrt{n}, \sqrt{c} +\frac{n}{c}-2\}$. 
\end{lemma}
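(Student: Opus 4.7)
The plan is to combine Lemma~\ref{lem:simple} with pigeonhole and an Erd\H{o}s--Szekeres argument applied to the distinct values of $S$. Since Lemma~\ref{lem:simple} already gives $L \ge \sqrt{n}$, the remaining task is to prove $L \ge \sqrt{c} + n/c - 2 = m + k - 2 + 1/c$, where $m = \sqrt{c}$ is an integer by hypothesis. Because $L$ is an integer and $n = kc+1$, this essentially amounts to showing $L \ge m + k - 1$ for $c \ge 2$ (the case $c=1$ being trivial).

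First I would extract two ingredients. By pigeonhole, since $S$ has $n = kc+1$ elements drawn from only $c$ distinct values, some value $v^*$ appears at least $\lceil n/c\rceil = k+1$ times, giving a constant (hence monotonic) subsequence of length $k+1$. Next, restricting $S$ to the first occurrence of each distinct value yields a sequence of $c = m^2$ pairwise distinct integers, to which Theorem~\ref{thm:erdos} applies and produces a monotonic subsequence of length at least $m$; call these values $w_1, \dots, w_m$ at first-occurrence positions $q_1 < \dots < q_m$.

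I would then splice the $r^* \ge k+1$ copies of $v^*$ into the monotonic backbone $w_1, \dots, w_m$. Assume without loss of generality that the backbone is increasing, and let $j^*$ be the index with $w_{j^*} \le v^* < w_{j^*+1}$ (with boundary conventions $w_0 = -\infty$ and $w_{m+1} = +\infty$). For each slot $[q_j, q_{j+1})$ let $r^*_j$ count the copies of $v^*$ falling inside that slot. The base construction $w_1, \dots, w_{j^*}, v^*, \dots, v^*, w_{j^*+1}, \dots, w_m$ is non-decreasing and has length $m + r^*_{j^*}$; a more flexible variant that skips some $w_i$'s can absorb copies of $v^*$ from several consecutive slots, yielding length $m + \sum_i(r^*_i - 1)$ over the selected slots.

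The main obstacle will be showing that among these constructions some always reaches length $m + k - 1$. If a single slot already contains $\ge k-1$ copies of $v^*$, the base construction suffices; otherwise the $\ge k+1$ copies of $v^*$ are spread thinly across many slots, and the skipping variant must be shown to compensate. The symmetric case of a decreasing backbone, and if necessary the freedom to choose the anchor adaptively among the most frequent values, should cover the remaining configurations. I expect a short case analysis on the placement of $v^*$ relative to the backbone, together with the identity $\sum_j r^*_j = r^* \ge k+1$, to close the argument.
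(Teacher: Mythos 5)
Your three ingredients are sound: the reduction of the target to $L\ge m+k-1$ (with $m=\sqrt c$, $c\ge 2$), the pigeonhole value $v^*$ with $r^*\ge\lceil n/c\rceil=k+1$ occurrences, and the Erd\H{o}s--Szekeres backbone $w_1,\dots,w_m$ obtained from the $c=m^2$ first occurrences via Theorem~\ref{thm:erdos}. But the splicing step you yourself flag as ``the main obstacle'' is a genuine gap, and the natural accounting shows it cannot be closed in the form you describe. The backbone positions $q_1<\dots<q_m$ cut $S$ into $m+1$ slots, and a splice that absorbs the copies of $v^*$ lying in slots $a,\dots,b$ (with $a\le j^*\le b$) must discard the backbone elements $w_{a+1},\dots,w_b$, so its length is $m-(b-a)+\sum_{i=a}^{b}r^*_i$. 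If the copies of $v^*$ are spread evenly, $r^*_i=(k+1)/(m+1)$ for every slot, a window of $t$ slots yields $m+1+t\bigl(\tfrac{k+1}{m+1}-1\bigr)$, which is maximized at an endpoint: at $t=m+1$ it degenerates to the constant subsequence of length $k+1$, and at $t=1$ it gives $m+\tfrac{k+1}{m+1}$. Both fall short of the target $m+k-1$ by up to $m-2$, so for every $c\ge 9$ the bound extractable from this construction is too weak; switching to another anchor among the most frequent values does not help if all values are spread the same way. Nothing in your argument forces the frequent value and the backbone to align, and that alignment is exactly what must be engineered.

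The paper closes this gap by a different mechanism that builds the coupling in from the start: it repeatedly looks at the first $c+1$ surviving elements, extracts (by Lemma~\ref{lem:simple}) a monotonic run of length $\sqrt c$ whose last element is as early as possible, and deletes only that last element, its ``representative.'' This produces $n-c$ representatives, so some value $q$ occurs at least $\lceil (n-c)/c\rceil$ times among them; the earliest such representative drags its entire length-$\sqrt c$ run along, and each remaining equal representative contributes one more element, giving $\sqrt c+\lceil (n-c)/c\rceil-1\ge\sqrt c+n/c-2$. The repeats being concatenated are thus, by construction, endpoints of $\sqrt c$-runs rather than occurrences of an independently chosen frequent value. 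To salvage your route you would have to prove that some anchor/backbone pair always aligns well enough, which is not obviously easier than the paper's argument.
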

\begin{proof}
If $S$ does not contain $c$ distinct integers, then we can apply the same technique as in the proof of Lemma~\ref{lem:simple} to transform $S$ into a multiset that satisfies this property. Hence it suffices to prove the claim when $S$ contains $c$ distinct integers. Since the $\sqrt{n}$ lower bound is guaranteed by Lemma~\ref{lem:simple}, we focus on the other term. 

We apply an induction on $k$. If $k=1$, then we have $c+1$ integers, and by  Erd\H{o}s-Szekeres theorem~\cite{classic}, we can find a monotonic subsequence of size at least $\sqrt{c}$. Since $n=c+1$, we have $\sqrt{c}  \ge  \sqrt{c} + \frac{n}{c}-2$. We may thus assume that $k>1$, and for each integer smaller than $k$, the claim holds. 

By Lemma~\ref{lem:simple}, we can find a monotonic subsequence \J{L} of size $\sqrt{c}$ from the first $c+1$ integers and delete the last element \J{$\ell$} of the subsequence.  We choose the earliest possible subsequence, i.e., a monotonic subsequence that minimizes the  position of the last element. \J{Later, we will refer to this last element $\ell$ as the \emph{representative} of the monotonic sequence $L$.} Note that we can repeat this step for at least $(n-c)$ steps, and remove a set $X=\{x_1,\ldots,x_{n-c}\}$ of $(n-c)$ integers. The set $X$ is ordered, i.e., $x_i$, where $1\le i\le (n-c)$, denotes the element deleted at step $i$. 
 
 Let \J{$q$ be an} integer with the maximum frequency in $X$, and let $X_q\subseteq X$ be the set consisting of all $q$. Since there are  at most $c$ distinct integers in $X$, $|X_q| \ge \lceil (n-c)/c \rceil$. Consider the earliest integer $p$ in  $X_q$. The integer  $p$ is a representative of a monotonic subsequence $Q$ of size $\sqrt{c}$. The subsequence consisting of $Q$  followed by  $X_q\setminus\{p\}$ is the required subsequence of size at least 
$
  \sqrt{c}+\left\lceil\frac{  (n-c)}{c}\right\rceil-1     
    \ge   \sqrt{c} + \frac{n}{c}-2.$
\end{proof}

 

Our work raises several natural directions for future research, e.g., (a) Prove a tight bound on the length of the monotonic subsequence that is guaranteed  to exist in every multiset of $c$ distinct numbers. (b) Prove a non-trivial lower bound  on the bend complexity for colored  simultaneous embedding. (c) Improve our results on the interplay between the bend complexity and the number of vertex locations. 

\J{\subsubsection*{Acknowledgement.} The author  thanks the anonymous reviewers for their helpful  comments, which improved the presentation of the paper.} 

\bibliographystyle{abbrv}
\bibliography{ref}

\end{document}